\documentclass[prl,nobalancelastpv2.1age,twocolumn,superscriptaddress]{revtex4-1}
\pdfoutput=1

\usepackage{color,amsthm,amsmath,amsxtra,amsfonts,dsfont,graphicx,bm}
\usepackage[colorlinks=true,linkcolor=blue, citecolor=blue, urlcolor=blue, bookmarks]{hyperref}

\usepackage{amsthm}

\def\Id{{\openone}}
\newcommand{\be}{\begin{equation}}
\newcommand{\ee}{\end{equation}}
\newcommand{\bea}{\begin{eqnarray}}
\newcommand{\eea}{\end{eqnarray}}
\newcommand{\bse}{\begin{subequations}}
\newcommand{\ese}{\end{subequations}}

\usepackage{amsthm}
\theoremstyle{definition}

\theoremstyle{plain}

\newtheorem{thm}{Theorem}[section]
\newtheorem{prop}{Proposition}[section]
\newtheorem{cor}{Corollary}[section]
\newtheorem{lem}{Lemma}[section]
\newtheorem{defn}{Definition}[section]
\newtheorem{example}{Example}[section]

\newcommand{\ket}[1]{\vert#1\rangle}
\newcommand{\bra}[1]{\langle#1\vert}

\begin{document}

\title{Quantum Cellular Automata, Tensor Networks, and Area Laws}

\author{Lorenzo \surname{Piroli}}
\author{J.~Ignacio \surname{Cirac}}
\affiliation{Max-Planck-Institut f{\"{u}}r Quantenoptik,
Hans-Kopfermann-Str.\ 1, 85748 Garching, Germany}
\affiliation{Munich Center for Quantum Science and Technology, Schellingstra\ss e 4, 80799 M\"unchen, Germany}	

\begin{abstract}
Quantum Cellular Automata are unitary maps that preserve locality and respect causality. We identify them, in any dimension, with simple tensor networks (PEPU) whose bond dimension does not grow with the system size. As a result, they satisfy an area law for the entanglement entropy they can create. We define other classes of non-unitary maps, the so-called quantum channels, that either respect causality or preserve locality. We show that, whereas the latter obey an area law for the amount of quantum correlations they can create, as measured by the quantum mutual information, the former may violate it. We also show that neither of them can be expressed as tensor networks with a bond dimension that is independent of the system size.
\end{abstract}

\maketitle

Causality is a fundamental concept in Physics. It states that physical actions can not propagate in space at an arbitrary speed. In Quantum Physics, this can be mathematically captured by the notion of Quantum Cellular Automata (QCA)~\cite{farrelly2019review,arrighi2019overview}. These are the most general unitary maps between quantum states that act in discrete space (i.e., in lattices) and time, and respect causality ~\cite{richter1996ergodicity,schumacher2004reversible,arrighi2008one,arrighi2011unitarity}. They can be viewed as the quantum version of classical cellular automata, which are systems with discrete variables evolving under a local update rule. In the last years a great deal of progress has been made in the characterization of QCA. So far, complete solutions have been obtained in one ~\cite{gross2012index} and two spatial dimensions ~\cite{freedman2020classification,haah2018nontrivial,haah2019clifford,freedman2019group}.
Additionally, in the first case QCA have been identified ~\cite{chen2011Complete,Po2016Chiral, cirac2017matrix,sahinoglu2018matrix,stephen2019subsystem,hillberry2020entangled,Gong2020Classification} with Matrix Product Operators, a $1D$ version of Tensor Networks (TN), which satisfy an extra condition named simpleness~\cite{cirac2017matrix} (this has been recently extended to fermionic systems ~\cite{fidkowski2019interacting,piroli2020fermionic}). This identification connects QCA with TN, a very active area of research in many-body physics and quantum information. While most of the progress on QCA has been on unitary maps, very little is known about quantum channels representing more general physical actions ~\cite{richter1996ergodicity,brennen2003entanglement,schumacher2004reversible}, for which it is not even clear how to properly define them.

In this work, we investigate the connections between QCA and TN~\cite{verstraete2008matrix,orus2014practical}, and characterize them in terms of the amount of entanglement and correlations they can create.
First, we identify QCA in any dimension as projected entangled pair unitary (PEPU) operators that are also simple, and with a bond dimension that does not grow with the lattice size. We also show that the amount of entanglement generated by the action of a QCA is limited by an area law, similar to the one that characterizes the ground states of local Hamiltonians \cite{Reviewarealaw}.
Additionally, we analyze two natural extensions of non-unitary QCA: Causality Preserving Quantum Channels (CPQC) and Locality Preserving Quantum Channels (LPQC). While the former satisfies causality, the latter cannot create long-range correlations and fulfills an area law for the quantum mutual information. The LPQC are a strict subset of CPQC and, unlike QCA, they can not be expressed as TN with fixed bond dimension.

\begin{figure}[tbp]
	\includegraphics[width=1.0\linewidth]{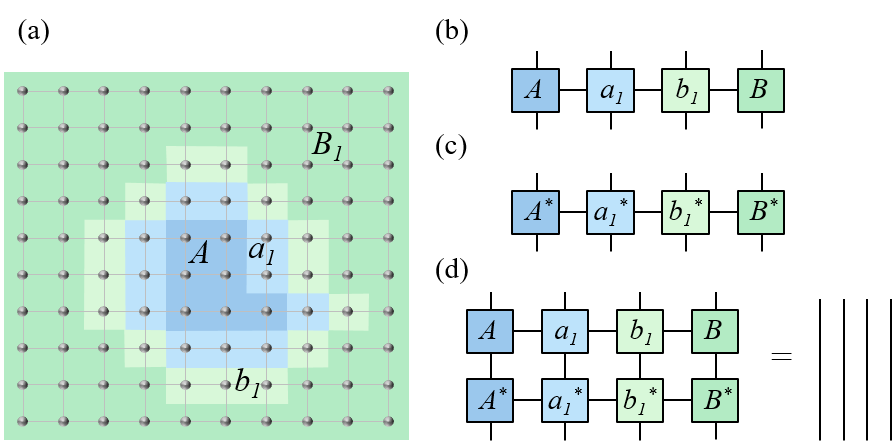}
	\caption{(a): Example of the different subsets defined in the text for $r=1$: $A$ is in dark blue and its neighborhood, $a_1$ in light blue. Their union is $\bar A_1$. $B_1$ and $b_1$ are in dark and light green respectively, and their union is $\bar B_1$. Subfigures (b), (c), (d): PEPU representation of $U$, $U^{\dagger}$, and graphical notation for the identity $UU^{\dagger}=\openone$, respectively.}
	\label{fig1}
\end{figure}

{\em QCA and Quantum Channels:}  We consider $N=M^{{\rm d}_L}$ qudits in a finite regular lattice in ${\rm d}_L$ dimensions. The lattice is characterized by a graph, ${\cal G}=(V,E)$, where the qudits are at the vertices $V$, which are represented by a vector $n\in \mathbb{Z}^{{\rm d}_L}$, and the edges $e_{n,m}\in E$ if $|n-m|=1$ for open boundary conditions, and similarly for periodic boundary conditions. The coordination number is $z=2{{\rm d}_L}$. The edges define a metric: the distance between to vertices, $\delta(n,m)$, is the minimum number of edges that connects them. The Hilbert space associated with the set of qudits is ${\cal H}=\otimes_{n\in V}{\cal H}_n$, where ${\rm dim}({\cal H}_n)=d$ is the physical dimension. For $r\le M/4$ and a subset $A\subset V$, we define its $r$-neighborhood, $a_r=\{n\in V\setminus A: \delta(n,A)\le r\}$, and $\bar A_r=A\cup a_r$. We further define the $r$-next-nearest neighborhood, $b_r=a_{2r}\setminus a_r$, and  $B_r=V\setminus (\bar A_r\cup b_r)$, so that $\bar B_r=B_r\cup b_r$ is the complement of $\bar A_r$ (see Fig. \ref{fig1}). We denote by $S$ all the sets $A$ such that $B_r$ is not empty. For a given $A\in S$ and $r$, the Hilbert space is decomposed as
 \be
\mathcal{H} = \mathcal{H}_A\otimes \mathcal{H}_{a_r} \otimes \mathcal{H}_{b_r} \otimes \mathcal{H}_{B_r} =
\mathcal{ H}_{\bar A_r}\otimes \mathcal{H}_{\bar B_r}\,.
 \ee
Finally, for $C\subset V$, we denote by ${\rm tr}_C$ the trace in $\mathcal{H}_C$ and by $X_C$ an operator supported on that space.

Let us now introduce a special type of quantum channels (QC) ${\cal E}$ acting on the qudits, i.e. trace-preserving completely positive maps~\cite{nielsen2002quantum}. We will denote by ${\cal E}^\dagger$ their adjoints with respect to the Hilbert-Schmidt inner product, describing the action in the Heisenberg picture.  Then
 \begin{defn}
\label{DefCPQC}
A Causality Preserving Quantum Channel (CPQC) on the lattice ${\cal G}$ with range $r$ is a channel ${\cal E}$ such that, for any $A\in S$ and $X_A$, there exists some $X_{\bar A_r}$ such that ${\cal E}^\dagger(X_A)=X_{\bar A_r}$
\end{defn}
Our definition is equivalent to that presented in Ref.~\cite{arrighi2019overview}. It states that for an observable localized at site $x$, the expectation value on the evolved state is determined by the restriction of the initial state on a neighborhood of $x$, thus justifying the name causality-preserving. When $\mathcal{E}$ is defined by a unitary operator $U$, namely ${\cal E}(X)=U X U^\dagger$ for all $X\in L(\mathcal{H})$, the set of linear operators acting on $\mathcal{H}$, we will say that the QC is unitary. Then, QCA are simply unitary CPQC. In such a case, ${\cal E}^\dagger(X)=U^\dagger XU$, and ${\cal E}^\dagger$ is still a QCA with the same range as $\mathcal{E}$~\cite{freedman2020classification}.

Before proceeding, let us mention that we could have considered more general graphs, ${\cal G}$, as long as they have no double edges nor self-loops. This would include other lattice geometries or topologies, but it would make the notation more cumbersome. Thus, in the following we will set $r=1$, drop the corresponding subindex in the sets $a,b,B$, and take $M\ge 4$~\footnote{For the lattices considered here, we can always \emph{block} (i.e. merge into blocks~\cite{cirac2017matrix_op}) $r^{{\rm d}_L}$ qudits (assuming that $\tilde M=M/r\in \mathds{N}$), and redefine the edges, so that the new lattice has $\tilde N=\tilde M^{{\rm d}_L}$ qudits, and the range of a QCA with range $r$ becomes equal to one, although the coordination number may increase}. 

We introduce now another class of QC:
\begin{defn}
${\cal E}$ is a Locality Preserving Quantum Channel (LPQC) if for any $A\in S$ and $\rho_{\bar A,\bar B}\ge 0$,
 \be
 \label{defLPQC}
 {\rm tr}_{a,b} \left[ {\cal E}(\rho_{\bar A}\rho_{\bar B})\right]=
 \frac{1}{d^N}{\rm tr}_{a,\bar B}\left[ {\cal E}(\rho_{\bar A})\right]
 {\rm tr}_{\bar A,b}\left[ {\cal E}(\rho_{\bar B})\right]\,.
 \ee
\end{defn}
This means that if we act on a product state with the quantum channel, no correlation is created between the regions $A$ and $B$. Intuitively, this corresponds to a form of localization in the Schrodinger picture, which, as we will see, represents a stronger condition than causality-preservation.


{\em Choi-Jamiolkowski state:} Instead of dealing with channels, it will be  useful to work with the corresponding Choi-Jamiolkowski states (CJS)~\cite{wolf2012quantum}. We associate an extra ancilla with each qudit, so that we get a copy of the lattice with vertices $V'$.  We also take $\Phi=|\Phi\rangle\langle\Phi|$, where $|\Phi\rangle = \sum_{s} |s\rangle_V\otimes |s\rangle_{V'} \in \mathcal{H}\otimes \mathcal{H}$ is an (unnormalized) maximally entangled state, and $|s\rangle=|s_1,\ldots,s_N\rangle$ is an element of the computational basis, where $s_n=1,\ldots,d$. For a channel, ${\cal E}$, its CJS is defined as $R= \left( {\cal E}_V\otimes \Id_{V'}\right) (\Phi) \in L(\mathcal{H}\otimes \mathcal{H})$, where the identity channel acts on the ancillas. It fulfills $R=R^\dagger\ge 0$, and ${\rm tr}_V(R)=\Id_{V'}$. In fact, any $R$ satisfying these conditions defines a channel, whose action is then given by ${\cal E}(\rho)={\rm tr}_{V'} (\rho_{V'}^T R)$, where the transpose is taken in the computational basis~\cite{wolf2012quantum}.

Given $A\in V$, we denote by $A'\subset V'$ the same set in the lattice of the ancillas. We can now characterize both CPQC and LPQC in terms of their CJS~\footnote{See Supplemental Material, which includes a citation to Refs.~\cite{Perez2,araki18entropy}, for a detailed proof of the statements presented in the main text.}
\begin{prop}
\label{PropCharact}
Given a channel, ${\cal E}$, for all $A\in S$ there exist $\sigma_{A,\bar A'}$ (and $\sigma_{B,\bar B'}$) such that its CJS, $R$, fulfills
\begin{description}
 \item[i] ${\rm tr}_{a,\bar B}(R)= \sigma_{A,\bar A'} \otimes \Id_{\bar B'}$ iff it is a CPQC.
 \item[ii]  ${\rm tr}_{a, b}(R)= \sigma_{A,\bar A'} \otimes \sigma_{B,\bar B'}$ iff it is a LPQC.
\end{description}
\end{prop}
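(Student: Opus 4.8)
\noindent\emph{Proof strategy.} The plan is to translate each of the two channel properties into an equivalent condition on a single marginal of the CJS $R$, using two elementary identities. The first is the standard ``transpose trick'': for every $X_V\in L(\mathcal{H})$ one has $\tr_V[(X_V\otimes\Id_{V'})R]=\mathcal{E}^\dagger(X_V)^{T}$, the transpose being taken in the computational basis with $\mathcal{H}_{V'}$ identified with $\mathcal{H}$; this follows in one line from $\mathcal{E}(\rho)=\tr_{V'}(\rho_{V'}^{T}R)$. The second is that same relation with a partial trace inserted: $\tr_C[\mathcal{E}(\rho)]=\tr_{V'}\big(\rho_{V'}^{T}\tr_C(R)\big)$ for any $C\subset V$. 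I would first record the combinatorial facts $V=\bar A\sqcup\bar B$ (so $V\setminus A=a\cup\bar B$ and $V\setminus(A\cup B)=a\cup b$), note that transposition preserves operator support and tensor-product structure, and observe that, since positive operators span $L(\mathcal{H}_X)$, any identity demanded of all states extends by (multi)linearity to all operators—every equation below being manifestly linear or bilinear in the relevant inputs.

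For part~i I would fix $A\in S$. By Definition~\ref{DefCPQC}, $\mathcal{E}$ is a CPQC iff $\mathcal{E}^\dagger(X_A)=X_{\bar A}\otimes\Id_{\bar B}$ for all $X_A$. Inserting the observable $X_A\otimes\Id_{V\setminus A}$ into the transpose trick and splitting $\tr_V=\tr_A\circ\tr_{a,\bar B}$, this becomes: for every $X_A$, $\tr_A\big[X_A\,\tr_{a,\bar B}(R)\big]$ is of the form (operator on $\bar A'$)$\,\otimes\Id_{\bar B'}$. A short ``slicing'' step---expand $\tr_{a,\bar B}(R)\in L(\mathcal{H}_A\otimes\mathcal{H}_{V'})$ in a Hilbert--Schmidt operator basis of $L(\mathcal{H}_A)$ and isolate its components by pairing with the dual basis---then shows this holds for all $X_A$ iff $\tr_{a,\bar B}(R)=\sigma_{A,\bar A'}\otimes\Id_{\bar B'}$ for a single $\sigma_{A,\bar A'}$; the factor ``$\Id_{\bar B'}$'' is inherited directly from the ``$\Id_{\bar B}$'' in the CPQC condition. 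As the whole argument is a chain of equivalences, both implications of i follow at once, and one quantifies over $A\in S$.

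For part~ii I would set $Q:=\tr_{a,b}(R)\in L(\mathcal{H}_A\otimes\mathcal{H}_B\otimes\mathcal{H}_{V'})$ with marginals $Q_{A,\bar A'}:=\tr_{B,\bar B'}(Q)$ and $Q_{B,\bar B'}:=\tr_{A,\bar A'}(Q)$. Feeding $\rho=\rho_{\bar A}\otimes\rho_{\bar B}$ into the reshaped identity for the left side of \eqref{defLPQC}, and $\rho=\rho_{\bar A}\otimes\Id_{\bar B}$, $\rho=\Id_{\bar A}\otimes\rho_{\bar B}$ for the two factors on the right (reading $\mathcal{E}(\rho_{\bar A})$ there as $\mathcal{E}(\rho_{\bar A}\otimes\Id_{\bar B})$), and then relabelling $\rho_{\bar A'}^{T},\rho_{\bar B'}^{T}$ as arbitrary operators $Y,Z$, I would show that \eqref{defLPQC} is equivalent to $\tr_{\bar A',\bar B'}\big[(Y\otimes Z)Q\big]=d^{-N}\,\tr_{\bar A'}[YQ_{A,\bar A'}]\otimes\tr_{\bar B'}[ZQ_{B,\bar B'}]$ for all $Y,Z$. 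For ``$\Leftarrow$'' one substitutes $Q=\sigma_{A,\bar A'}\otimes\sigma_{B,\bar B'}$: both sides factor, and the $d^{-N}$ is absorbed because $\tr(\sigma_{A,\bar A'})\tr(\sigma_{B,\bar B'})=\tr(Q)=\tr(R)=d^{N}$ by trace preservation of $\mathcal{E}$. For ``$\Rightarrow$'' I would freeze $Y$ and let $Z$ range over $L(\mathcal{H}_{\bar B'})$ to obtain $\tr_{\bar A'}[YQ]=d^{-N}\tr_{\bar A'}[YQ_{A,\bar A'}]\otimes Q_{B,\bar B'}$ for all $Y$; expanding $Q$ in an operator basis of $L(\mathcal{H}_{\bar A'})$ and matching coefficients then yields $Q=d^{-N}Q_{A,\bar A'}\otimes Q_{B,\bar B'}$, the asserted product form.

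I expect the only genuine friction to be in part~ii, namely getting the translation of \eqref{defLPQC} into CJS language exactly right---tracking precisely which qudits each partial trace removes, respecting the convention that $\mathcal{E}(\rho_{\bar A})$ on the right of \eqref{defLPQC} is padded with the \emph{unnormalized} identity on $\bar B$, and ensuring the compensating factor $d^{-N}$ (which ultimately comes from $\tr R=d^{N}$, i.e.\ trace preservation) appears with the correct power. The transpose trick, the slicing step, and the final coefficient-matching are all routine linear algebra.
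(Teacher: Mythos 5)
Your proposal is correct and takes essentially the same route as the paper: both reduce (i) and (ii) to statements about the marginals ${\rm tr}_{a,\bar B}(R)$ and ${\rm tr}_{a,b}(R)$ via the inverse Choi--Jamiolkowski formula ${\cal E}(\rho)={\rm tr}_{V'}(\rho_{V'}^T R)$ and finish by linearity/span arguments, with the normalization ${\rm tr}(R)=d^N$ accounting for the $1/d^N$. The only differences are cosmetic: for (i) you use the Heisenberg-picture definition directly through the transpose trick, whereas the paper first passes through a Schr\"odinger-picture lemma characterizing CPQC via ${\rm tr}[\sigma_A{\cal E}(\rho_{\bar A}Y_{\bar B})]=0$, and for (ii) you spell out the coefficient-matching step that the paper declares ``trivial''.
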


The $\sigma$'s are determined by the above equations, e.g., $\sigma_{A,\bar A'}= {\rm tr}_{a,\bar B,\bar B'}(R)/d^{|\bar B^\prime|}$. This proposition expresses that the CJS of CPQC and LPQC become decorrelated if we trace some of the qudits.


{\em Tensor Networks:} Let us now briefly recall the TN description of quantum states, operators, and channels~\cite{verstraete2008matrix,orus2014practical}. Given a set of $N$ qudits in a graph ${\cal G}$, we associate with each vertex a tensor $A[n]$ with rank $z_n+1$, where $z_n$ is the coordination number of that vertex. We associate an index to each of the edges connecting that vertex, and the other one to the corresponding qudit. The latter is called physical index and runs from $1,\ldots,d$, and the rest are the auxiliary indices, running from $1,\ldots,D$, the bond dimension. Then, we say that 
 \be
 \label{Psi}
 |\Psi\rangle = \sum_{s} c_{s} |s\rangle
 \ee
is a TN state of bond dimension $D$ if there exist tensors $A[n]$ of that bond dimension, such that each $c_s$ can be obtained by assigning the value $s_n$ to the physical index of $A[n]$ and contracting the rest of the indices according to the lattice~\cite{verstraete2008matrix,orus2014practical}. For arbitrary lattices, they are called projected entangled pair states (PEPS). Analogously, TN can define operators and maps. For operators, we can replace $|s\rangle$ by $|s\rangle\langle s'|$, so that now the tensors $B[n]$ have two physical indices each, and for maps the tensors $C[n]$ have four. They are called PEPO (or PEPU if they are unitary) and PEPM of bond dimension $D$, respectively. Any PEPU (PEPM) has the same TN description as the PEPS (PEPO) corresponding to its CJS, and thus the same bond dimension.

The graphical representation of TN~\cite{verstraete2008matrix,orus2014practical} consists in replacing each tensor by a box, each index by a line, and contraction of indices by identifying the corresponding lines. For a graph, ${\cal G}$, PEPS, PEPO, PEPM are thus represented by the same graph where each of the vertices is replaced by a tensor that has one, two and four lines with open ends, respectively, and otherwise they are connected according to the edges. We can block tensors to represent blocks of qudits. For instance, the representation of two PEPU, $U$ and $U^\dagger$, acting on sets $AabB$ is shown in Fig. \ref{fig1}(b,c). We have written in each box the name of the set where the tensor acts, and used an asterisk to specify that the tensor is transposed and complex conjugated. Figure \ref{fig1}(d) represents $UU^\dagger=\Id$, where the multiplication is read from bottom to top. The bond dimension for the tensor corresponding to $A$ is $D^{z_A}$, where $z_A$ is the number of edges connecting $A$ with its neighborhood $a$, and the physical dimension is $d^{|A|}$. We can now define a notion that was introduced in \cite{cirac2017matrix}.

\begin{defn}
We say that a PEPU is simple if for any $A\in S$
 \be
 \label{simple1}
\raisebox{-24pt}{\includegraphics[height=5em]{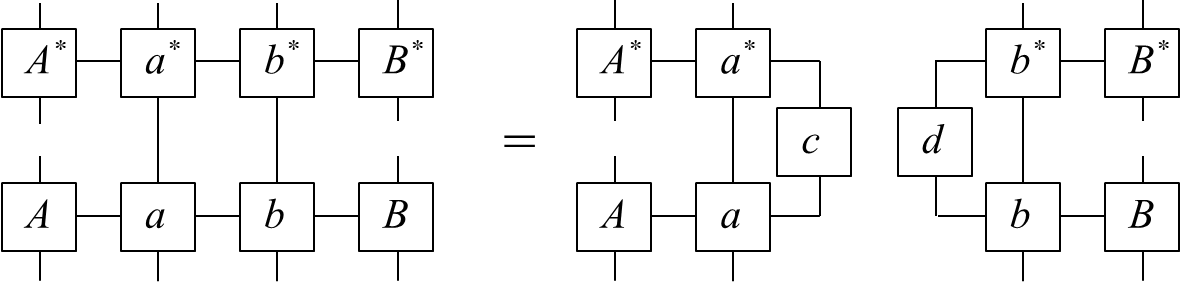}}
 \ee
where
  \be \label{simple2}
\raisebox{-24pt}{\includegraphics[height=5em]{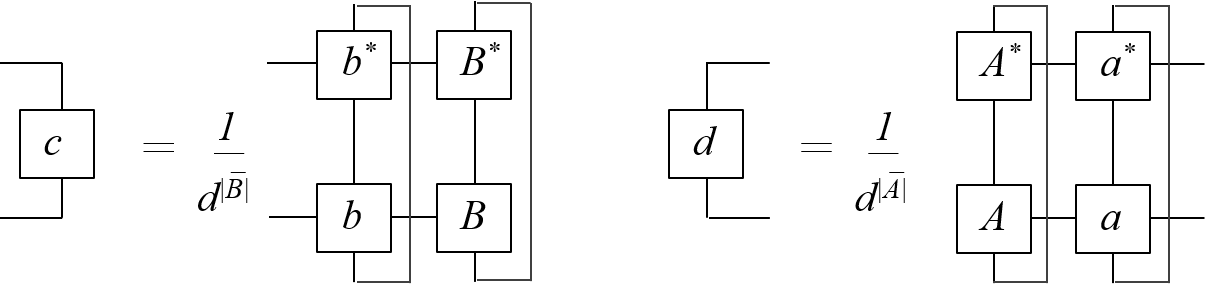}}
 \ee
\end{defn}


{\em Quantum Cellular Automata:} We establish now the connection between QCA and PEPU as well as with LPQC.
\begin{thm}
\label{ThmuQCA}
Given a unitary channel acting as ${\cal E}(\rho)=U \rho U^\dagger$ on the qudits of a lattice , the following statements are equivalent:

\begin{description}
\item[i] ${\cal E}$ is a QCA (namely, a CPQC).
\item[ii] ${\cal E}$ is a LPQC.
\item[iii] $U$ can can be represented by a simple PEPU, where $D$ only depends on $d$, ${{\rm d}_L}$ and $z$.
\end{description}
\end{thm}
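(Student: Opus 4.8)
The plan is to prove the three equivalences in a cycle, $\mathrm{iii}\Rightarrow\mathrm{ii}\Rightarrow\mathrm{i}\Rightarrow\mathrm{iii}$, working throughout with the Choi--Jamiolkowski state $R$ via Proposition~\ref{PropCharact} rather than with the channels directly. Since $\mathcal{E}$ is unitary, its CJS is the pure state $R=|U\rangle\langle U|$ with $|U\rangle=(U\otimes\Id_{V'})|\Phi\rangle$, which is itself the PEPS corresponding to the PEPU $U$, with the same bond dimension. This reduces every statement to a statement about the pure state $|U\rangle$ on the doubled lattice: condition~iii says $|U\rangle$ is a simple PEPS of fixed bond dimension; condition~i says $\mathrm{tr}_{a,\bar B}(|U\rangle\langle U|)=\sigma_{A,\bar A'}\otimes\Id_{\bar B'}$; condition~ii says $\mathrm{tr}_{a,b}(|U\rangle\langle U|)=\sigma_{A,\bar A'}\otimes\sigma_{B,\bar B'}$.

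For $\mathrm{iii}\Rightarrow\mathrm{ii}$ I would use the simpleness condition \eqref{simple1}--\eqref{simple2} directly: applied to the region $A$ and (by symmetry) to the region $B$, simpleness lets one contract the PEPU tensors on $\bar A$ and on $\bar B$ independently once we have cut along the double layer of bonds surrounding each region, so that the reduced density matrix $\mathrm{tr}_{a,b}(|U\rangle\langle U|)$ factorizes across the $A$/$B$ cut. Concretely, simpleness on $A$ gives $\mathrm{tr}_{a,\bar B,\bar B'}(R)\propto \sigma_{A,\bar A'}$ and on $B$ gives $\mathrm{tr}_{b,\bar A,\bar A'}(R)\propto\sigma_{B,\bar B'}$, and a third application (or a normalization/counting argument using $\mathrm{tr}_V(R)=\Id_{V'}$) shows the joint trace is the tensor product, which is exactly condition ii of Proposition~\ref{PropCharact}. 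The implication $\mathrm{ii}\Rightarrow\mathrm{i}$ is immediate: tracing out $B$ in addition on both sides of \eqref{defLPQC}-type identity (equivalently, tracing $b,B,B'$ in the CJS characterization ii) collapses $\sigma_{B,\bar B'}$ to a multiple of $\Id_{B'}$ and $\sigma_{A,\bar A'}$ survives, giving characterization i; so every LPQC is a CPQC, with no unitarity needed here.

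The substantive direction is $\mathrm{i}\Rightarrow\mathrm{iii}$, and this is where I expect the main obstacle. Starting from a QCA, one knows (stated in the excerpt, and a standard fact) that $\mathcal{E}^\dagger$ is also a QCA with range one, i.e.\ $U^\dagger X_A U$ is supported on $\bar A$ for every single-site $A$, equivalently $U X_{\bar B}U^\dagger$ is supported on $\bar{(\bar B)}$; combining causality in both directions pins down the support structure of the algebra map $X\mapsto UXU^\dagger$ on overlapping neighborhoods. The idea is to build the PEPU tensor locally from these supports: decompose $U$ through the structure of the algebras $U^\dagger \mathcal{A}_A U$ for a covering family of sites, show that on each vertex one can extract a tensor with auxiliary indices of dimension bounded by a function of $d$, $\mathrm{d}_L$, $z$ only (the bond dimension counts the operator content that can ``leak'' across one edge, which by causality is finite and size-independent), and then verify that the resulting PEPU is simple by re-deriving \eqref{simple1} from the two-sided causality/support constraints. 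The hard part is the bookkeeping that the local pieces glue consistently into a genuine tensor network with the claimed uniform bond dimension — in $1D$ this is the Matrix Product Operator/index theory of \cite{cirac2017matrix,gross2012index}, and the task is to carry out the analogous support analysis in arbitrary dimension using only the neighborhood combinatorics set up before the theorem, the CJS characterization i from Proposition~\ref{PropCharact}, and the fact that purity of $R$ plus the product-with-identity structure forces a low-rank (hence low-bond-dimension) factorization across each elementary cut.
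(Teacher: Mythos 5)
Your treatment of the easy implications is fine and close in spirit to the paper's: iii$\Rightarrow$ii is a direct application of simpleness, and ii$\Rightarrow$i follows either at the channel level or, as you do, at the CJS level by also tracing the physical $B$ and using ${\rm tr}_V(R)=\Id_{V'}$ to force ${\rm tr}_B(\sigma_{B,\bar B'})\propto\Id_{\bar B'}$; no unitarity is needed there, just as in the paper.

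The genuine gap is in i$\Rightarrow$iii, which you yourself flag as the obstacle but do not actually bridge. Your proposed route --- decompose $U$ via the support algebras $U^\dagger\mathcal{A}_AU$ on a covering family of sites, extract a local tensor per vertex, and then ``glue'' the pieces consistently --- is essentially the $1D$ MPO/index-theory strategy, and in dimensions ${\rm d}_L>1$ that gluing is not bookkeeping: it is precisely the unresolved part of the higher-dimensional QCA classification, and no argument is given for why the local pieces admit a consistent contraction with uniformly bounded bond dimension. The paper avoids this entirely with a different idea: since $U$ is a QCA, the operators $\tilde Q_n=(U\otimes\Id)Q_n(U\otimes\Id)^\dagger$, with $Q_n=\Id-(1/d)\ket{\Phi}_n\bra{\Phi}$, are \emph{local, commuting} projectors, and the CJS $\ket{\Psi}=(U\otimes\Id)\ket{\Phi}$ is the unique ground state of the frustration-free Hamiltonian $\sum_n\tilde Q_n$. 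Writing $\ket{\Psi}\propto\prod_n(\Id-\tilde Q_n)\bigotimes_m\ket{\alpha_m}$ and observing that only the two operators $T_n$, $T_{n'}$ touch a given edge $(n,n')$ yields a PEPS (hence PEPU) whose bond dimension depends only on $d$, ${\rm d}_L$ and $z$ (e.g.\ $D\le d^{16}$ on the square lattice). Moreover, simpleness of the resulting PEPU is not ``re-derived from two-sided support constraints'' as you suggest, but follows immediately from the factorization ${\cal E}^\dagger(X_AY_B)={\cal E}^\dagger(X_A){\cal E}^\dagger(Y_B)$ established in the i$\Rightarrow$ii step. Without an ingredient of this kind (a parent-Hamiltonian or circuit-type construction), your sketch of i$\Rightarrow$iii does not constitute a proof.
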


While all unitaries can be represented by PEPU, the last equivalence establishes that for a QCA this can be done efficiently, namely with a bond dimension that does not depend on $N$. This has strong implications on the amount of entanglement that a unitary $U$ associated with a QCA can create between any two regions. If one applies $U$ to a pure product state $\ket{\Psi}$, then, the entanglement of $U\ket{\Psi}$ between any set $A$ and the rest is $\le D^{|\partial A|}$, where $|\partial A|$ is the number of edges between $A$ and $a$. This gives rise to an area law; to see that, we have to consider a sequence of QCA, $S_E=\{{\cal E}_M\}_{M=4}^\infty$, each acting on a lattice of $M^{{\rm d}_L}$ qudits. Furthermore, we denote by $E(A:A^c)$ the entanglement entropy~\cite{nielsen2002quantum} between the qudits in $A\subset V$ and its complement $A^c=V/A$, and by $\partial A$ their boundary.

\begin{defn}
A sequence of QCA obeys an area law if for all $A\subset V$, the state obtained by applying any of the QCA to any pure product state fulfills $E(A:A^c)\le c |\partial A|$, where $c$ is a constant independent of $M$.
\end{defn}
Thus, Theorem \ref{ThmuQCA} immediately implies that:
\begin{cor}
Any sequence of QCA satisfies an area law.
\end{cor}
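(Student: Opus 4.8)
\emph{Proof proposal.} The plan is to reduce the statement to Theorem~\ref{ThmuQCA}(iii) together with the standard bound relating the bond dimension of a tensor network to the Schmidt rank across a cut. Fix a sequence $S_E=\{\mathcal{E}_M\}_{M\ge 4}$ of QCA, all acting on ${\rm d}_L$-dimensional lattices with the same physical dimension $d$ and coordination number $z=2{\rm d}_L$. By Theorem~\ref{ThmuQCA}, each $\mathcal{E}_M$ is unitary, $\mathcal{E}_M(\rho)=U_M\rho U_M^\dagger$, with $U_M$ representable by a (simple) PEPU of bond dimension $D$ that depends only on $d$, ${\rm d}_L$ and $z$; in particular the \emph{same} $D$ works for every $M$ in the sequence.

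The key step is then to apply $U_M$ to an arbitrary pure product state $\ket{\Psi}=\bigotimes_{n\in V}\ket{\psi_n}$. Contracting each single-site vector $\ket{\psi_n}$ into the input physical leg of the PEPU tensor $C[n]$ produces a tensor with one output physical leg and the same $z_n$ auxiliary legs of dimension $D$; hence $U_M\ket{\Psi}$ is a PEPS on $\mathcal{G}$ of bond dimension at most $D$. For any $A\subset V$, cutting this network along $\partial A$ severs exactly $|\partial A|$ auxiliary bonds, each of dimension $D$, so $U_M\ket{\Psi}$ can be written as a sum of at most $D^{|\partial A|}$ product vectors across the bipartition $A:A^c$. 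Thus the Schmidt rank of $U_M\ket{\Psi}$ across the cut is $\le D^{|\partial A|}$, whence $E(A:A^c)\le\log\!\big(D^{|\partial A|}\big)=|\partial A|\log D$. Setting $c=\log D$, which is independent of $M$ and of the chosen product state, yields the claimed area law.

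I expect no genuine obstacle here: essentially all the difficulty is already packed into Theorem~\ref{ThmuQCA}(iii), whose content is precisely that $D$ can be chosen independent of $N$. The only points deserving care are: (a) that the $|\partial A|$ appearing in the definition of the area law (the number of edges between $A$ and its neighborhood) is exactly the number of tensor-network bonds cut, which holds by construction of the PEPU; (b) that absorbing a product state into a PEPU does not raise the bond dimension; and (c) that simpleness of the PEPU is not actually needed for this argument---only the $N$-independent bond dimension enters. One may additionally note that the same reasoning bounds the operator entanglement of $U_M$ itself (equivalently, of its CJS), since the CJS admits the same PEPS description, but the product-state version above already suffices for the corollary.
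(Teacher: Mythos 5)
Your proposal is correct and follows essentially the same route as the paper: the corollary is obtained directly from Theorem~\ref{ThmuQCA}(iii) by absorbing the product state into the PEPU, noting that cutting the resulting PEPS along $\partial A$ bounds the Schmidt rank by $D^{|\partial A|}$, and hence $E(A:A^c)\le |\partial A|\log D$ with a constant independent of $M$. Your side remarks (only the $N$-independent bond dimension is needed, not simpleness, and the same bound applies to the CJS) are accurate and consistent with the paper's reasoning.
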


{\em General CPQC:} General CPQC possess very different properties than QCA. For instance, the set of CPQC acting on qudits in a lattice is convex. Note that this is not true for LPQC. Furthermore, whereas for QCA and any region $A\in S$,
 \be
 \label{factor}
 {\cal E}^\dagger(X_A Y_{B}) = {\cal E}^\dagger(X_A) {\cal E}^\dagger(Y_B)\,,
 \ee
with ${\cal E}^\dagger(X_A) =X_{\bar{A}}$, ${\cal E}^\dagger(Y_B)=Y_{\bar B}$, this is not necessarily true for CPQC.

Any channel (unitary or not) can be written in terms of a unitary operator through the Stinespring dilation~\cite{wolf2012quantum}. In particular, we can consider the channel ${\cal E}$ built out of a QCA, ${\cal E}_u: L(\mathcal{H}\otimes \mathcal{H})\to L(\mathcal{H}\otimes \mathcal{H})$ as
 \be
 \label{Stinespring}
 {\cal E}(\rho)={\rm tr}_{V'}\left[{\cal E}_u\left(\rho \otimes (|1\rangle\langle 1|)^{\otimes N}\right)\right]\,,
 \ee
where $|1\rangle$ is a state of the ancilla qudits~\footnote{In order to define the range of the QCA $\mathcal{E}_u$, one also needs to specify the sets of edges in the lattice made of the physical and ancillary vertices. Here we define it in the most natural way, by connecting each physical site with the corresponding ancilla.}. Let us now introduce three other sets of channels:
\begin{defn}
We define fQC as the set of CPQC fulfilling the factorization condition \eqref{factor}, while tnQC as the set of CPQC whose CJS has a PEPO description (with bond dimension bounded by a function of $d$, ${\rm d}_L$ and $z$, but not of $M$). Finally, we denote by dQC the set of CPQC that are obtained by a Stinespring dilation in terms of a QCA [that is, ${\cal E}_u$ in Eq.~\eqref{Stinespring} is a QCA].
\end{defn}

Let us give some illustrative examples. We take $d=2$, i.e. qubits, with $\{|s_n\rangle\}_{s_n=0}^1$ the local computational basis, and $\sigma_n^\alpha$ the Pauli operators.

\begin{example}
\label{Example1}
{\em A channel that is a tnQC but not a LPQC}. Let us define
 \be
 {\cal E}(\rho)= \frac{1}{2} \left[\rho + \left(\sigma^x\right)^{\otimes N} \rho \left(\sigma^x\right)^{\otimes N}\right]\,.
 \ee
${\cal E}$ is a convex combination of two tnQC with bond dimension $D=1$, and thus a tnQC with bond dimension $D=2$. Furthermore, it is also a CPQC, since it is a convex combination of two CPQC. However, it is not a LPQC since it does not satisfy Proposition \ref{PropCharact}.
\end{example}

\begin{example}
\label{Example2}
{\em A set of channels that are LPQC but not tnQC}. Let us consider the state (\ref{Psi}), where $s_n=0,1$ and each qubit $n=(n_1,n_2,\ldots,n_{{\rm d}_L})$ is maximally entangled with the qubit $n'=(n_1',n_2,\ldots,n_{{\rm d}_L})$, where $|n_1'-n_1|=M/2$ with $M$ even. Let us define $R= \Id_V\otimes \Id_{V'} / 2^N + S$ where
 \be
 S =  k_N\sum_s c_s \left[\otimes_{n=1}^N(\sigma_n^x\otimes \sigma_{n'}^x)^{s_n} (\sigma_n^z\otimes \sigma_{n'}^z)^{1-s_n}\right]\,.
 \ee
Choosing $k_N$ so that $||S||_\infty \le 1/2^N$, we have $R\ge 0$, and tracing any system or ancilla qudit we get ${\rm tr}_n(S)={\rm tr}_{n'}(S)=0$. Thus, ${\rm tr}_{V}(R)=\Id_{V'}$ and therefore $R$ is a valid CJS that defines a channel, ${\cal E}_M$, for each $M$. Furthermore, $R$ fulfills the conditions of Proposition \ref{PropCharact}, and it is therefore a LPQC. However, we claim that it does not admit a TN representation with finite bond dimension. Indeed, the latter is true iff $S$ can be represented as a PEPO with finite bond dimension.  But the (unnormalized) state $ |\Psi\rangle = \sum_s c_s\ket{s}$ is such that the rank of the reduced state in a hypercube of side $L<M/2$ is $d^{\left(L^{{\rm d}_L}\right)}$, so that its PEPS representation has a bond dimension that increases exponentially with $M$. But any 
PEPO representation of $S$ can be interpreted as PEPS for $\ket{\Psi}$ with the same bond dimension. We conclude that $R$ cannot be represented by a PEPO with bond dimension independent of $M$.
\end{example}

\begin{figure}
	\includegraphics[scale=0.25]{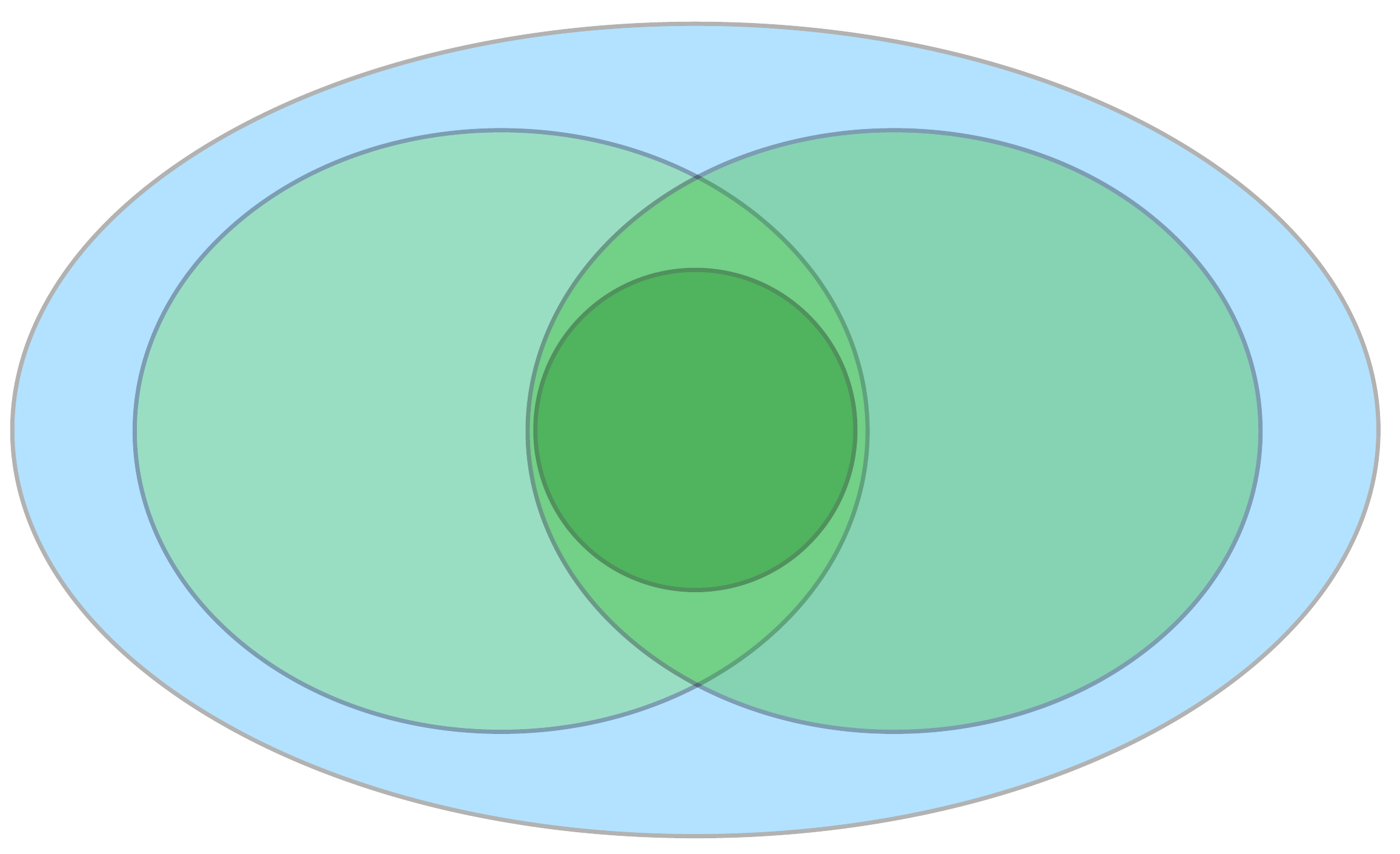}
	\put(-135,48){tnQC}
	\put(-60,58){LPQC}
	\put(-60,38){fQC}
	\put(-96,88){CPQC}
	\put(-96,48){dQC}
	\caption{Euler diagram for the class of channels defined in the main text, representing the statement of Theorem~\ref{thm2}.}
	\label{fig:venn}
\end{figure}

We are now in the position to formulate the following:
\begin{thm}
\label{thm2}
For any of the considered lattices, $dQC\subset fQC = LPQC \subset CPQC$. Furthermore,  $dQC \subset tnQC$ and $tnQC\neq LPQC$, where all inclusions are strict  (see Fig.~\ref{fig:venn}).
\end{thm}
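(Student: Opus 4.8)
The plan is to break the chain into the equality $fQC=LPQC$, the three inclusions $dQC\subseteq fQC$, $LPQC\subseteq CPQC$, $dQC\subseteq tnQC$, and then to deduce every strictness statement and the incomparability $tnQC\neq LPQC$ from Examples~\ref{Example1} and~\ref{Example2}. Two of the inclusions are immediate: since the factorization condition \eqref{factor} explicitly contains the clause $\mathcal{E}^\dagger(X_A)=X_{\bar A}$, every fQC is a CPQC; and $LPQC\subseteq CPQC$ follows from Proposition~\ref{PropCharact} by starting from item~ii, $\tr_{a,b}(R)=\sigma_{A,\bar A'}\otimes\sigma_{B,\bar B'}$, tracing further over $\bar B$, and using $\tr_V(R)=\Id_{V'}$ to see that $\tr_{\bar B}(\sigma_{B,\bar B'})$ is a multiple of $\Id_{\bar B'}$, so that $\tr_{a,\bar B}(R)$ acquires the product form of item~i.

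The core is $fQC=LPQC$, which I would prove at the level of Choi--Jamiolkowski states, using Proposition~\ref{PropCharact} (which already matches CPQC with item~i and LPQC with item~ii) together with the identity $\mathcal{E}^\dagger(X)=\bigl(\tr_V[(X_V\otimes\Id_{V'})R]\bigr)^T$. For an fQC the CJS obeys item~i, $\tr_{a,\bar B}(R)=\sigma_{A,\bar A'}\otimes\Id_{\bar B'}$, and — applying causality with the roles of $A$ and $B$ swapped — also the mirrored identity $\tr_{\bar A,b}(R)=\Id_{\bar A'}\otimes\sigma_{B,\bar B'}$. Translating the three pieces of \eqref{factor} into partial traces of $R$ gives $\mathcal{E}^\dagger(X_AY_B)=\bigl(\tr_{A,B}[(X_A\otimes Y_B)\,\tr_{a,b}(R)]\bigr)^T$, while $\mathcal{E}^\dagger(X_A)=\bigl(\tr_A[X_A\sigma_{A,\bar A'}]\otimes\Id_{\bar B'}\bigr)^T$ and $\mathcal{E}^\dagger(Y_B)=\bigl(\Id_{\bar A'}\otimes\tr_B[Y_B\sigma_{B,\bar B'}]\bigr)^T$ by item~i and its mirror. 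Hence \eqref{factor} is equivalent to $\tr_{A,B}[(X_A\otimes Y_B)\,\tr_{a,b}(R)]=\tr_{A,B}[(X_A\otimes Y_B)(\sigma_{A,\bar A'}\otimes\sigma_{B,\bar B'})]$ for all $X_A,Y_B$, and since the operators $X_A\otimes Y_B$ span $L(\mathcal{H}_A\otimes\mathcal{H}_B)$ this holds iff $\tr_{a,b}(R)=\sigma_{A,\bar A'}\otimes\sigma_{B,\bar B'}$ — i.e. iff item~ii holds, i.e. iff $\mathcal{E}$ is an LPQC. For the direction $LPQC\subseteq fQC$ one additionally invokes $LPQC\subseteq CPQC$ to secure item~i, and then checks by a short normalization argument (tracing item~ii over $A$, over $B$, and fully, and matching against $\tr_A(\sigma_{A,\bar A'})=\Id_{\bar A'}$, which itself comes from $\tr_V(R)=\Id_{V'}$) that item~ii actually forces $\tr_{a,b}(R)=\sigma_{A,\bar A'}\otimes\sigma_{B,\bar B'}$ with the \emph{same} $\sigma$'s that appear in item~i.

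Turning to the two inclusions involving $dQC$: if $\mathcal{E}(\rho)=\tr_{V'}[\mathcal{E}_u(\rho\otimes\omega)]$ with $\omega=(\ket{1}\bra{1})^{\otimes N}$ and $\mathcal{E}_u$ a QCA on the doubled lattice (each site joined to its own ancilla), then $\mathcal{E}^\dagger(X)=\tr_{V'}[(\Id_V\otimes\omega)\,\mathcal{E}_u^\dagger(X\otimes\Id_{V'})]$. Because $\mathcal{E}_u$ is a QCA, $\mathcal{E}_u^\dagger(X_A\otimes\Id_{V'})$ is supported on the doubled-lattice neighborhood $A\cup a\cup A'$, and tracing its ancillas against the \emph{product} state $\omega$ leaves an operator supported on $\bar A$, giving $\mathcal{E}^\dagger(X_A)=X_{\bar A}$; moreover \eqref{factor} for $\mathcal{E}_u$ splits $\mathcal{E}_u^\dagger(X_AY_B\otimes\Id_{V'})$ into a product of operators supported on the disjoint neighborhoods of $A$ and of $B$ (disjointness uses the $b$-buffer and the fact that $\partial B\subseteq b$ for $r=1$), and since $\omega$ factorizes over ancillas the partial trace factorizes too, yielding \eqref{factor} for $\mathcal{E}$; hence $dQC\subseteq fQC$. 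For $dQC\subseteq tnQC$: by Theorem~\ref{ThmuQCA} the CJS of the QCA $\mathcal{E}_u$ is a PEPO whose bond dimension is bounded by a function of $d$, ${\rm d}_L$ and the doubled-lattice coordination number, hence still by a function of $d$, ${\rm d}_L$, $z$; writing the CJS of $\mathcal{E}$ as the ancilla partial trace of that PEPO applied to $\Phi\otimes\omega$ — a contraction of it with fixed-size tensors — shows that $R$ is again a PEPO of bounded bond dimension, so $\mathcal{E}\in tnQC$.

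Finally, all strictness and the incomparability are read off from the two examples, using the inclusions just established. Example~\ref{Example1} is a CPQC that is not an LPQC, whence $LPQC=fQC\subsetneq CPQC$; it is moreover a tnQC, and since $dQC\subseteq fQC=LPQC$ it cannot be a dQC, so $dQC\subsetneq tnQC$. Example~\ref{Example2} is an LPQC that is not a tnQC, and since $dQC\subseteq tnQC$ it cannot be a dQC, so $dQC\subsetneq fQC$. Together, Example~\ref{Example1} lies in $tnQC\setminus LPQC$ and Example~\ref{Example2} lies in $LPQC\setminus tnQC$, so $tnQC\not\subseteq LPQC$ and $LPQC\not\subseteq tnQC$, i.e. $tnQC\neq LPQC$. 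I expect the genuine difficulty to lie entirely in the $fQC=LPQC$ step: keeping straight which partial trace of $R$ encodes which Heisenberg-picture identity, handling the mild asymmetry between the neighborhoods of $A$ and of $B$ when invoking the mirrored form of item~i of Proposition~\ref{PropCharact}, and tracking normalizations carefully enough that \eqref{factor} maps onto item~ii exactly rather than only up to scalars.
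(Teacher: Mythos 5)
Your proposal is correct, and its skeleton is the same as the paper's: the same decomposition into $fQC=LPQC$, $LPQC\subseteq CPQC$, $dQC\subseteq LPQC$ and $dQC\subseteq tnQC$, with Examples~\ref{Example1} and~\ref{Example2} supplying every strictness claim and $tnQC\neq LPQC$. The difference is where the two nontrivial steps are argued. The paper stays in the channel picture: $LPQC\subseteq CPQC$ is exactly the implication (ii$\Rightarrow$i) of Theorem~\ref{ThmuQCA} (proved via the Lemma characterizing CPQC through $\tr[\sigma_A\,\mathcal{E}(\rho_{\bar A}Y_{\bar B})]=0$), and $fQC=LPQC$ is a one-line Heisenberg--Schr\"odinger duality identity, namely that testing the factorization condition \eqref{factor} against product states $\rho_{\bar A}\rho_{\bar B}$ gives the same scalar as testing the LPQC condition \eqref{defLPQC} against $X_AY_B$. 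You instead work at the level of the Choi--Jamiolkowski state, recasting both conditions as partial-trace factorizations of $R$ via Proposition~\ref{PropCharact}; this obliges you to also establish the mirrored form of item~i (causality applied to the region $B$, whose $1$-neighborhood is contained in $b$) and to match the normalizations of the $\sigma$'s appearing in items i and ii --- two points you correctly flag and resolve (the two constants multiply to one because $\tr_V(R)=\Id_{V'}$). The paper's identity buys brevity and avoids this bookkeeping; your CJS route makes the role of Proposition~\ref{PropCharact} explicit and delivers $LPQC\subseteq CPQC$ as a by-product of the same computation, independently of Theorem~\ref{ThmuQCA}. Your treatments of $dQC\subseteq fQC$ (support and disjointness of the Heisenberg images on the doubled lattice, factorization of the ancilla trace over the product state $\omega$) and of $dQC\subseteq tnQC$ (local contraction of the bounded-bond-dimension PEPO of $\mathcal{E}_u$ with fixed-size tensors) are correct, more explicit versions of the paper's one-line claims, and the final strictness logic coincides with the paper's implicit use of the two examples.
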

\noindent Note that this theorem does not say whether the intersection between tnQC and LPQC coincides with the set of dQC or simply contains it. This remains an interesting open question.

Finally, let us discuss an area law for the classes of QC defined above. As irreversible QC will typically create mixed states out of pure ones, rather than talking about the entanglement it is more appropriate to investigate the amount of correlations that can be created. The relevant measure for this is the mutual information: given a state, $\rho$, in a qudit lattice, a subset of qudits, $A\in V$, and its complement, $A^c=V/A$, the mutual information is $I(A:A^c)=S_A+S_{A^c}-S_V$, where $S_A$ is the von Neumann entropy of the reduced state of the qudits in $A$~\cite{nielsen2002quantum}. For Gibbs states of local Hamiltonians or for PEPO it is known that the mutual information obeys an area law~\cite{Wolf2008Area}. This motivates the following definition:

\begin{defn}
A  sequence of QC obeys an area law if for all $A\in V$, the state obtained by applying any of the QC to any product state fulfills $I(A:A^c)\le c |\partial A|$, where $c$ is a constant independent of $M$.
\end{defn}

We can now state our third main result:

\begin{thm}
\label{arealawmutual}
Any  sequence of LPQC obeys an area law.
\end{thm}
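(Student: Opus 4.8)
The plan is to use the defining feature of an LPQC---that it creates no correlation at all between the bulk regions $A$ and $B$---and then to push the mutual information $I(A:A^c)$ onto the thin collar $a\cup b = a_2$ around $A$ by means of the chain rule for mutual information; a bounded-geometry count turns $|a_2|$ into $O(|\partial A|)$. Concretely, the first step is to note that any product input $\rho_0=\bigotimes_n\rho_n$ is, for every $A\in S$, of the form $\rho_{\bar A}\otimes\rho_{\bar B}$ with $\rho_{\bar A}=\bigotimes_{n\in\bar A}\rho_n$, $\rho_{\bar B}=\bigotimes_{n\in\bar B}\rho_n$. Substituting it into Eq.~\eqref{defLPQC} shows that the output $\rho={\cal E}(\rho_0)$ satisfies ${\rm tr}_{a,b}(\rho)=\sigma_A\otimes\sigma_B$, where $\sigma_A$ is an operator on $\mathcal H_A$ and $\sigma_B$ one on $\mathcal H_B$; since this is a genuine reduced density matrix we may take both normalized, and hence $\rho_{AB}$ is a product state, i.e.\ $I_\rho(A:B)=0$.

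The second step localizes the correlations. Since $A^c$ is the disjoint union of $B$ and $a\cup b$, the chain rule gives $I(A:A^c)=I(A:B)+I(A:ab\,|\,B)=I(A:ab\,|\,B)$, using $I_\rho(A:B)=0$. A conditional mutual information is bounded by twice the logarithm of the Hilbert-space dimension of either of the two non-conditioning parties; taking the $ab$ side yields $I(A:A^c)\le 2\log\dim\mathcal H_{a\cup b}=2\,|a_2|\log d$.

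The third step is the purely combinatorial estimate $|a_2|\le(1+z)\,|\partial A|$. Every vertex of $a=a_1$ is an endpoint of at least one edge of $\partial A$ (the edge realizing its distance $1$ to $A$), and distinct vertices of $a_1$ give distinct such edges, so $|a_1|\le|\partial A|$; every vertex of $b=a_2\setminus a_1$ is adjacent to some vertex of $a_1$, and each vertex has at most $z$ neighbours, so $|b|\le z|a_1|\le z|\partial A|$. Therefore $I(A:A^c)\le c\,|\partial A|$ with $c=2(1+z)\log d$, which depends only on $d$ and $z$ and not on $M$. The remaining case $A\notin S$ (i.e.\ $B$ empty) is immediate: then $A^c=a_2$, and the crude bound $I(A:A^c)\le 2|A^c|\log d$ together with the same geometric estimate finishes the proof; for general range $r>1$ one blocks $r^{{\rm d}_L}$ sites as in the footnote and repeats the argument, the constant then also depending on ${\rm d}_L$ through the enlarged coordination number.

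The argument is essentially soft; the one point that needs care is the \emph{uniformity} over all subsets $A$ and all system sizes $M$ of the geometric bound $|a_2|\le(1+z)|\partial A|$ (more generally $|a_{2r}|\le C(r,z)|\partial A|$), which is exactly where the bounded coordination number $z$---and, implicitly, the assumption that ${\cal G}$ has no double edges or self-loops---is used. Everything else is the chain rule plus the dimension bound for conditional mutual information, and the input of Eq.~\eqref{defLPQC}.
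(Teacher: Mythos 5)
Your proof is correct and follows essentially the same route as the paper's: the LPQC condition forces $I(A:B)=0$ on the output of a product input, and the collar $a\cup b$ can add at most $2(|a|+|b|)\log d = O(|\partial A|)$, your chain-rule-plus-dimension bound on $I(A:ab\,|\,B)$ being equivalent to the paper's Araki--Lieb/subadditivity lemma $I(xy:z)\le I(x:z)+2D_y$. The only differences are cosmetic: you argue directly on the output state rather than on the Choi--Jamiolkowski state, and you make explicit the geometric estimate $|a|+|b|\le(1+z)|\partial A|$ and the case $B=\emptyset$, which the paper leaves implicit.
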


Finally, we show that causality is not enough to bound the amount of correlations that can be created when acting on a product state:
\begin{example}\label{ex:mutual_info}
	Let us consider the dephasing channel acting on two qubits, n,m:
	\be
	{\cal E}_{n,m}(\rho)= \frac{1}{2}\left[\rho + (\sigma^z_n\otimes \sigma^z_m) \rho (\sigma^z_n\otimes \sigma^z_m)\right]\,,
	\ee
and define the channel ${\cal E}=\otimes_{n\in V_1} {\cal E}_{n,n+e}$ where $V_1$ contains all $n\in V$ with $n_1\le M/2$, and $e=(M/2,0,0,\ldots,0)$. ${\cal E}$ is a convex combination of Pauli channels, and thus a CPQC. However, the CJS is $R = \otimes_{n\in V_1} \rho_{n,n',n+e,n'+e}$. The mutual information between $(n,n')$ and $(n+e,n'+e)$ is one. Taking into account that the mutual information is additive under tensor product, we conclude that for a hypercube of side $L<M/2$, it is $L^{{\rm d}_L}$.
\end{example}

{\em Conclusions:} We have investigated the connections between QCA, TN, and generation of quantum entanglement and correlations. We have shown that QCA can be efficiently represented by TN, implying an area law for the entanglement entropy that they generate. We have explored the implications of causality and locality for irreversible QC, proving that only the former provides a constraint on the amount of quantum correlations that can be created. Still, even LPQC can not be represented efficiently via TN.

Our work opens up several questions and possibilities. The identification of QCA with PEPU allows one to use the established techniques based on TN for numerical simulations of their action~\cite{schollwock2011density,ran2020Contration}.  This also gives us a very natural framework to investigate the classification of (symmetry-protected) topological (SPT) phases for QCA~\cite{Gong2020Classification} in higher dimensions, with possible implications for the classification of Floquet SPT phases~\cite{Po2016Chiral,Roy2017Roy,Potter2017Dynamically,harper2020topology}. Additionally, QCA inherit the holographic principle of PEPS~\cite{Cirac2011Entanglement}, which can also be used for their classification. Let us also mention some questions that our work immediately raises. Given that CPQC constitute a convex set, perhaps they can be obtained as the convex hull of either dQC or, more generally, LPQC. A solution to this problem would give us a very useful characterization of this set. In turn, this might be important in order to study equivalence classes of CPQC under smooth deformations. 

{\em Acknowledgments:}
We thank Alex Turzillo for discussions. LP acknowledges support from the Alexander von Humboldt foundation. JIC acknowledges support by the EU Horizon 2020 program through the ERC Advanced Grant QENOCOBA No. 742102, and from the DFG (German Research Foundation) under Germany’s Excellence Strategy - EXC-2111 - 390814868.

\bibliography{./bibliography}

\newpage

\section*{Supplementary Material}

Here we will prove the results stated in the main text. In all the theorems and propositions, we use statements like ``for all $A\in S$", or ``for all $X_A$", or ``there exists a traceless $Y_{\bar B}$". In order to ease the reading, we will omit those statements when formulating the proofs whenever there is no room for confusion. We start with a characterization of QCA in the Schr\"odinger picture.

\begin{lem}
A channel ${\cal E}$ is a CPQC iff for all $A\in S$, $\sigma_{A},\rho_{\bar A}\ge 0$ and $Y_{\bar B}=Y_{\bar B}^\dagger$, with ${\rm tr}(Y_{\bar B})=0$,
 \be
 \label{characQCA}
 {\rm tr}\left[ \sigma_A {\cal E}( \rho_{\bar A} Y_{\bar B})\right]=0\,.
 \ee
\end{lem}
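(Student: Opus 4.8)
The plan is to pass to the Heisenberg picture and reduce the claim to an elementary linear-algebra fact about when an operator on $\mathcal{H}_{\bar A}\otimes\mathcal{H}_{\bar B}$ is supported on $\bar A$. Since $A\in S$ forces $\bar A$ and $\bar B$ to be complementary subsets of $V$, the operator $\rho_{\bar A}Y_{\bar B}$ in $L(\mathcal{H})$ is unambiguously $\rho_{\bar A}\otimes Y_{\bar B}$, and $\sigma_A$ acts trivially outside $A\subset\bar A$. Hence $\mathrm{tr}[\sigma_A\,{\cal E}(\rho_{\bar A}Y_{\bar B})]=\mathrm{tr}[{\cal E}^\dagger(\sigma_A)\,(\rho_{\bar A}\otimes Y_{\bar B})]$, so condition \eqref{characQCA} says precisely that, for every $A\in S$ and every $\sigma_A\ge0$, the operator ${\cal E}^\dagger(\sigma_A)$ is Hilbert--Schmidt orthogonal to every $\rho_{\bar A}\otimes Y_{\bar B}$ with $\rho_{\bar A}\ge0$ and $Y_{\bar B}=Y_{\bar B}^\dagger$ traceless.

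First I would record two spanning facts: the positive operators $\mathbb{C}$-linearly span the full operator algebra on a finite-dimensional space, and the Hermitian traceless operators $\mathbb{C}$-linearly span the traceless subspace (adjoin $iY$ to pass from Hermitian to anti-Hermitian). Thus \eqref{characQCA} is equivalent to $\mathrm{tr}[{\cal E}^\dagger(X_A)(X_{\bar A}\otimes Z_{\bar B})]=0$ for all $X_A$, all $X_{\bar A}\in L(\mathcal{H}_{\bar A})$ and all traceless $Z_{\bar B}\in L(\mathcal{H}_{\bar B})$. Next I would use the orthogonal splitting $L(\mathcal{H}_{\bar B})=\mathbb{C}\,\Id_{\bar B}\oplus\{\text{traceless}\}$, hence $L(\mathcal{H})=\big[L(\mathcal{H}_{\bar A})\otimes\mathbb{C}\,\Id_{\bar B}\big]\oplus\big[L(\mathcal{H}_{\bar A})\otimes\{\text{traceless}\}\big]$: an operator is Hilbert--Schmidt orthogonal to the second summand iff it lies in the first, i.e. iff it has the form $X'_{\bar A}\otimes\Id_{\bar B}$, which is exactly the statement ``it is supported on $\bar A$''.

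Combining these two steps, \eqref{characQCA} holding for all $A\in S$ is equivalent to: for every $A\in S$ and every $X_A$, ${\cal E}^\dagger(X_A)$ is supported on $\bar A$, which (with $r=1$) is Definition~\ref{DefCPQC}. The converse direction is the same chain read backwards: if ${\cal E}$ is a CPQC then ${\cal E}^\dagger(\sigma_A)=X_{\bar A}\otimes\Id_{\bar B}$, so $\mathrm{tr}[{\cal E}^\dagger(\sigma_A)(\rho_{\bar A}\otimes Y_{\bar B})]=\mathrm{tr}_{\bar A}(X_{\bar A}\rho_{\bar A})\,\mathrm{tr}_{\bar B}(Y_{\bar B})=0$ since $Y_{\bar B}$ is traceless. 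I do not expect a genuine obstacle; the only points needing care are (i) that complex --- not merely real --- combinations of the admissible $\sigma_A,\rho_{\bar A},Y_{\bar B}$ are needed to recover the full algebras and the traceless subspace, so that orthogonality is really tested against all of $L(\mathcal{H}_{\bar A})\otimes\{\text{traceless}\}$, and (ii) keeping explicit that $\bar A,\bar B$ are complementary so the factorization of the trace over the bipartition is valid.
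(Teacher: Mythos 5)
Your proposal is correct and follows essentially the same route as the paper's proof: both pass to the Heisenberg picture via $\mathrm{tr}[\sigma_A\,{\cal E}(\rho_{\bar A}Y_{\bar B})]=\mathrm{tr}[{\cal E}^\dagger(\sigma_A)\,\rho_{\bar A}Y_{\bar B}]$, use that positive operators span the full algebra and that Hermitian traceless operators span the traceless subspace, and conclude that orthogonality to $L(\mathcal{H}_{\bar A})\otimes\{\text{traceless}\}$ forces ${\cal E}^\dagger(\sigma_A)=X_{\bar A}\otimes\Id_{\bar B}$, i.e.\ support on $\bar A$; the converse is the same direct factorization with $\mathrm{tr}_{\bar B}(Y_{\bar B})=0$. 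The only cosmetic difference is that you phrase the key step as an orthogonal decomposition of the whole operator space, whereas the paper fixes $\rho_{\bar A}$ and argues on the partial trace $Z_{\bar B}\propto\Id_{\bar B}$.
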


\begin{proof}
(if) We will use the Definition \ref{DefCPQC} of a CPQC. We can always write $X_{A}=\sigma_{A}^{1}-\sigma_{A}^{2}+i \sigma_{A}^{3}-i \sigma_{A}^{4}$, where all $\sigma^i_A\ge 0$. Thus, it is enough to show that for any $\sigma_A\ge 0$, ${\cal E}^\dagger(\sigma_A)$ is supported in ${\bar A}$. For all $\rho_{\bar A}$ and traceless $Y_{\bar B}$ we have $0={\rm tr}[{\cal E}^\dagger(\sigma_A) \rho_{\bar A} Y_{\bar B}]={\rm tr}_{\bar B}(Z_{\bar B} Y_{\bar B})$, so that $Z_B={\rm tr}_{\bar A}[{\cal E}^\dagger(\sigma_A) \rho_{\bar A}]\propto \Id_{\bar B}$, and thus ${\cal E}^\dagger(\sigma_A)$ is supported on ${\bar A}$.

\noindent
(only if) We have ${\rm tr}[\sigma_A {\cal E}(\rho_{\bar A} Y_{\bar B})]={\rm tr}[{\cal E}^\dagger(\sigma_A) \rho_{\bar A} Y_{\bar B}]=
{\rm tr}[X_{\bar A} \rho_{\bar A} Y_{\bar B}]\propto {\rm tr}_{\bar B}(Y_{\bar B})=0$.
\end{proof}

\noindent
{\bf Proposition} \ref{PropCharact}:

\begin{proof}
(i) The statement becomes trivial by noticing that
 \[
{\rm tr}[X_A {\cal E}(\rho_{\bar A}Y_{\bar B})]=
 {\rm tr}_{A,\bar A'}\left[X_A \rho_{\bar A'}^T
 \left({\rm tr}_{\bar B'}(Y_{\bar B'}^T
 \sigma_{A,\bar A',\bar B'})\right]\right)
 \]
where  $\sigma_{A,\bar A',\bar B'}={\rm tr}_{a,\bar B}(R)$.

\noindent (ii) Let us define
 \bse
 \bea
 \alpha_{A,B}&=&{\rm tr}_{a,b}\left[{\cal E}(\rho_{\bar A}\rho_{\bar B})\right],\\
 \beta_{A}&=&{\rm tr}_{a,\bar B}\left[{\cal E}(\rho_{\bar A})\right],\\
 \gamma_{B}&=&{\rm tr}_{\bar A,b}\left[{\cal E}(\rho_{\bar B})\right],
 \eea
 \ese
and $\sigma_{A,\bar A',B,\bar B'}={\rm tr}_{a,b}\left(R\right)$. Then, the statement becomes trivial by noticing that
 \begin{eqnarray*}
 \alpha_{A,B}&-&\frac{1}{d^N}\beta_{A}\gamma_B = {\rm tr}_{\bar A',\bar B'} \left[\rho^T_{\bar A}\rho^T_{\bar B}
 \sigma_{A,\bar A',B,\bar B'} - \right. \\
&& \frac{1}{d^N} {\rm tr}_{B}\left( \rho^T_{\bar A} \sigma_{A,\bar A',B,\bar B'}\right) {\rm tr}_{A}\left( \rho^T_{\bar B} \sigma_{A,\bar A',B,\bar B'}\right)
  \left. \right]\,.
 \end{eqnarray*}
\end{proof}

\noindent
{\bf Theorem} \ref{ThmuQCA}:

\begin{proof}
(ii$\Rightarrow$ i) We will show that any LPQC is a CPQC. Let us assume that ${\cal E}$ is a LPQC. We will show that it fulfills \eqref{characQCA}. Indeed, let us consider $A,X_A,\rho_{\bar A}\ge 0$ and a traceless $Y_{\bar B}=Y_{\bar B}^\dagger$. We can always write $Y_{\bar B}=\rho_{\bar B}-\tilde \rho_{\bar B}$, with $\rho_{\bar B}, \tilde \rho_{\bar B}\geq 0$. Then using \eqref{defLPQC}, and ${\rm tr}\rho_{\bar B}={\rm tr}\tilde \rho_{\bar B}$, Eq.~\eqref{characQCA} immediately follows.

\noindent
(iii$\Rightarrow$ ii) Using the fact that the PEPU is simple [Eqs.~(\ref{simple1}), (\ref{simple2})], we immediately have
 \be
 {\rm tr}_{a,b}(U \rho_{\bar A}\rho_{\bar B} U^\dagger ) =\frac{1}{d^{N}}
 {\rm tr}_{a,\bar{B}}( U \rho_{\bar A}U^\dagger )
 {\rm tr}_{\bar{A},b}(U \rho_{\bar B} U^\dagger )\,.
 \ee

(i$\Rightarrow$ ii) We do not really need to prove this, since it follows from the other implications. However, we will use it in the proof below and, additionally, this will serve as a proof for a piece of Theorem \ref{thm2}. This is why we will only use that [cf. (\ref{factor})]
 \be
 {\cal E}^\dagger (X_A Y_B) = {\cal E}^\dagger (X_A) {\cal E}^\dagger (Y_B)\,,
 \ee
which is obvious for QCA. For any $\rho_{\bar A},\rho_{\bar B}\ge 0$ let us denote by
 \be
 \sigma_{A,B} = {\rm tr}_{a,b}[{\cal E}(\rho_{\bar A}\rho_{\bar B})]\,.
 \ee
For any $X_A,Y_B$ we have
 \bea
 \label{simple3}
{\rm tr}_{A,B}(X_A Y_B \sigma_{A,B}) &=& {\rm tr}({\cal E}^\dagger (X_A)
 {\cal E}^\dagger (Y_B) \rho_{\bar A}\rho_{\bar B}) \nonumber\\
 &=&  {\rm tr}_{\bar{A}}[\tilde X_{\bar A} \rho_{\bar A}]
 {\rm tr}_{\bar{B}}[\tilde Y_{\bar B} \rho_{\bar B}]\,,
 \eea
where $\tilde X_{\bar A}={\cal E}^\dagger(X_A)$ and
$\tilde Y_{\bar B}={\cal E}^\dagger(Y_B)$.

(i$\Rightarrow$ iii) We will first show that $U$ is a PEPU with finite bond dimension. We denote by $|\Psi\rangle$ its (pure) CJS
 \be
 |\Psi\rangle = (U\otimes \Id) |\Phi\rangle\,.
 \ee
We define $Q_{n}= \Id-(1/d)|\Phi\rangle_n\langle \Phi|$ (where $(1/d)|\Phi\rangle_n\langle \Phi|$ is the projector onto the maximally entangled state between qudit $n$ and its ancilla) and 
\be
 \tilde H= (U\otimes\Id) \left[\sum_n Q_n \right] (U\otimes\Id)^\dagger = \sum_n \tilde Q_n\,,
 \label{eq:hamiltonian}
 \ee
with $\tilde Q_n= (U\otimes \Id) Q_n (U\otimes \Id)^\dagger$. These operators are local, since $U$ is a QCA, and mutually commute, $[\tilde Q_n,\tilde Q_m]=U[Q_n,Q_m]U^\dagger =0$. Furthermore, $\ket{\Psi}$ is the unique ground state of the frustration free Hamiltonian $\tilde H$.  This is because  $\tilde H$ has the same spectrum of $H=\sum_n Q_n$, and the ground state of the latter is clearly unique, since $Q_n$ is a projector with rank $d^2-1$. Then we can use the argument of Ref.~\cite{Perez2} to show that it is a PEPS with a finite bond dimension. The idea is that $\ket{\Psi}$ can be prepared by projecting a random state in the ground state
 \be
|\Psi\rangle \propto \prod_n (\Id-\tilde Q_n) \bigotimes_{m}|\alpha_m\rangle\,,
\label{eq:projection}
 \ee
where $|\alpha_m\rangle$ is any state of the qudit at site $m$ and its ancilla. Since the $\Id-\tilde Q_n$ are local and thus can be decomposed as sum of operators acting on a small region, each of the projectors creates a tensor around one region. But if $\ket{\Psi}$ can be represented by a TN with a given bound dimension, then so can $U$. Note that the bond dimension is independent of $N$. To see this, note that $T_n=(\Id-\tilde Q_n)$ acts non-trivially only on $z+1$ sites, and that the number of operators acting simultaneously on a given pair of neighboring qudits $(n,n^\prime)$ (and the corresponding ancillas) is $2$, $T_n$, and $T_{n^\prime}$. These are the only operators that modify the bond dimension of the link connecting $n$ and $n^\prime$. Thus, $D$ is clearly independent of $N$.

Finally, the fact that the PEPU is simple, immediately follows from (\ref{simple3}).
\end{proof}

It is instructive, as an example, to compute explicitly an upper bound for the bond dimension $D$ for a square lattice, i.e. ${\rm d}_L=2$ (and arbitrary local physical dimension $d$). As usual, we assume the QCA has $r=1$, and coordination number $4$, so that $T_n=\Id-\tilde Q_n$ will act on the $n$-th qudit, its $4$ nearest neighbors, and the corresponding ancillas. Thus, $T_n=\Id-\tilde Q_n$ can be represented as a plaquette with $5$ incoming and outcoming legs, each associated with a Hilbert space of dimension $d^2$ (corresponding to one physical local system and one ancilla). We can then enumerate the legs, and decompose the plaquette as a $1D$ matrix product operator, with bond dimension $D\leq d^8$. Now, by ``bending'' some of the legs, we can cast this in the form of a PEPO, where the four ``outer'' sites are only connected to the central one, labeled by $n$. The global PEPO, corresponding to $\prod_n T_n$ is finally obtained by patching together the local ones. Since the bond dimension between neighboring sites $n$, $n^\prime$ is only modified by the action of $T_n$, $T_{n^\prime}$,  it is easy to see that the above procedure can be carried out in such a way that $D\leq d^{16}$.

Note that the proof of Theorem~\ref{ThmuQCA} applies to generic QCA, not necessarily displaying translation symmetry. However, in the case $U$ is translationally invariant, the argument could be simply adjusted to recover a translationally-invariant PEPO description. In this case, we can replace the arbitrary product state in Eq. ~\eqref{eq:projection} with a state $\ket{\phi}^{\otimes N}$, which is not annihilated by $\prod_n(\openone-\tilde{Q}_n)$, where $\ket{\phi}$ is some state of the single-site qudit and the corresponding ancilla. This is possible as long as there exists a single qudit operator $T$ such that ${\rm tr}(UT^{\otimes N}) \neq 0$. If this is not the case, we could either block spins so that now $T$ can act on more qudits or apply the projectors to a PEPS with small bond dimension (instead of a product state), so that the final PEPS has still finite bond dimension. Finally, we mention that there is an alternative proof that QCA are tnQC. This is based on the fact that, given a QCA $U$, the operator $U\otimes U^{\dagger}\in L(\mathcal{H}\otimes \mathcal{H})$ can be represented as a quantum circuit in the doubled Hilbert space $\mathcal{H}\otimes \mathcal{H}$~\cite{arrighi2011unitarity,gross2012index}. The  idea is then to represent such a circuit as a PEPU, and to take a partial expectation value with respect to a product state in the second system.\\

\noindent
{\bf Theorem} \ref{thm2}

\begin{proof}
$LPQC \subset CPQC$: This has been proven already in (ii $\Rightarrow$i) in Theorem \ref{ThmuQCA}. That the inclusion is strict is clear from Example \ref{Example1}.

\noindent
$LPQC= fQC$: This follows immediately from
 \begin{eqnarray*}
 &&{\rm tr}\left[ [{\cal E}^\dagger (X_A Y_B)-{\cal E}^\dagger (X_A)
 {\cal E}^\dagger ( Y_B)] \rho_{\bar A} \rho_{\bar B}\right] =\\
 && {\rm tr}\left[ X_A Y_B {\cal E} (\rho_{\bar A} \rho_{\bar B})\right]
 - {\rm tr}_{\bar A}\left[ X_A  {\cal E} (\rho_{\bar A})\right]
 {\rm tr}_{\bar B}\left[ Y_B  {\cal E} (\rho_{\bar B})\right]\,.
 \end{eqnarray*}

\noindent
$dQC \subseteq tnQC$: It automatically follows from the fact that QCA are tnQC and that tracing does not change this fact.

\noindent
$dQC\subseteq LPQC$: It is immediate from Eq.~\eqref{Stinespring} and the fact that ${\cal E}_u$ is a QCA.

\noindent
$tnQC \ne dQC \ne LPQC$: This follows from Examples~\ref{Example1},~\ref{Example2}.

\end{proof}

\noindent
{\bf Theorem} \ref{arealawmutual}

\begin{proof}
We will prove this here for the (normalized) CJS corresponding to the channel. For the action of the channel on any product state, mixed or not, the same argument trivially applies. This proof is based on Proposition \ref{PropCharact} and the following property of the mutual information. For any sets $x,y,z$, $I(x:yz),I(xy:z)\le I(x:z) + 2 D_y$, where $D_y$ is the logarithm of the dimension of the Hilbert space corresponding to $y$. To see this, note that, using the Araki–Lieb triangle inequality~\cite{araki18entropy} we have $S(AaB) \geq  S(AB)-S(a)$, while, using subadditivity, $S(a)+S(A)\geq S(Aa)$. Then, $I(Aa:B) = S(B)+S(Aa)-S(AaB) \leq S(B)+S(A)+ S(a)+S(a)-S(AB) = I(A:B) + 2S(a)\leq  I(A:B) + 2D_a$, where we used that $S(a) \leq  D_a$. Applying this to the (normalized) CJS of a LPQC and any $A\in S$, we have
 \bea
 I(\bar A \bar A':\bar B \bar B')&\le&  I(A \bar A':B \bar B') + 2( |a|+ |b|)|\log_2(d)\nonumber\\
	&= &2( |a|+ |b|)|
	(d)\,.
 \eea
since $I(A \bar A':B \bar B')=0$, due to Proposition \ref{PropCharact}(ii). 

\end{proof}

\noindent{\bf Example} \ref{ex:mutual_info}
\ \\

\noindent Finally, we provide a few additional details regarding the construction of Example~\ref{ex:mutual_info}, showing explicitly that the mutual information between $(n,n')$ and $(n+e,n'+e)$ is one. To this end, we denote by $\ket{I}_n=(\ket{0}_n\ket{0}_{n^\prime}+\ket{1}_n\ket{1}_{n^\prime})/\sqrt{2}$ the normalized maximally entangled state between $n$  and $n^\prime$, and $\ket{z}_n=(\sigma^z_n\otimes \openone_n^\prime )\ket{I}$. Then
\begin{align}
{\cal E}_{n,n+e}(\rho_{n,n',n+e,n'+e})=&\frac{1}{2}\left[\ket{I}_n\bra{I}\otimes \ket{I}_{n+e}\bra{I} \right.\nonumber\\
+&\left. \ket{z}_n\bra{z}\otimes \ket{z}_{n+e}\bra{z}\right]\,.
\end{align}
Using that $\ket{I}$ and $\ket{z}$ are orthogonal, we obtain $S(n,n^\prime, n+e,n^\prime+e)=1$. Next, tracing over $n$,$n^\prime$, we have
\begin{align}
{\rm tr}_{n,n^\prime}[{\cal E}_{n,n+e}(\rho_{n,n',n+e,n'+e})]=&\frac{1}{2} \ket{I}_{n+e}\bra{I} \nonumber\\
	+& \frac{1}{2}\ket{z}_{n+e}\bra{z}\,,
\end{align}
so that $S(n+e,n^\prime+e)=1$. In the same way, we can show $S(n,n^\prime)=1$, finally implying $I(n,n^\prime: n+e,n^\prime+e)=1$.

\end{document}